\title{The COVID-19 pandemic as experienced by the individual}
\author{Patrick Garland\thanks{School of Medicine, University of Southampton, UK: email {\tt P.Garland@soton.ac.uk}}  
\and
Dave Babbitt\thanks{Data Scientist, Booz Allen Hamilton, USA: email {\tt babbitt\_dave@bah.com}} 
\and
Maksym Bondarenko\thanks{WorldPop, School of Geography and Environmental Science, University of Southampton, UK: email
{\tt M.Bondarenko@soton.ac.uk}} 
\and
Alessandro Sorichetta\thanks{WorldPop, School of Geography and Environmental Science, University of Southampton, UK: email
{\tt A.Sorichetta@soton.ac.uk}} 
\and
Andrew J. Tatem\thanks{WorldPop, School of Geography and Environmental Science, University of Southampton, UK: email
{\tt A.J.Tatem@soton.ac.uk}} 
\and
Oliver Johnson\thanks{School of Mathematics, University of Bristol, UK: email {\tt maotj@bristol.ac.uk}}
}
\date{\today}
\newcommand{\psk}{\mbox{ people/km}^2}
\newtheorem{theorem}{Theorem}[section]
\newtheorem{lemma}[theorem]{Lemma}
\newtheorem{definition}[theorem]{Definition}
\newtheorem{remark}[theorem]{Remark}
\newtheorem{example}[theorem]{Example}
\begin{document}

\maketitle

\begin{abstract}
The ongoing COVID-19 pandemic has progressed with varying degrees of intensity in individual countries, suggesting it is important to analyse  factors that vary between them. We study measures of `population-weighted density', which capture density as perceived by a randomly chosen individual. These measures of population density can significantly explain variation in the initial rate of spread of COVID-19 between countries within Europe. However, such measures do not explain differences on a global scale, particularly when considering countries in East Asia, or looking later into the epidemics.
Therefore, to control for country-level differences in response to COVID-19 we consider the cross-cultural measure of individualism proposed by Hofstede. This score can significantly explain variation in the size of epidemics across Europe, North America, and East Asia. Using both our measure of population-weighted density and the Hofstede score we can significantly explain half the variation in the current size of epidemics across Europe and North America.
By controlling for country-level responses to the virus and population density, our analysis of the global incidence of COVID-19 can help focus attention on epidemic control measures that are effective for individual countries.
\end{abstract}

\section{Introduction}

Since the onset of the COVID-19 pandemic in December 2019, over a million people have died of the disease worldwide 
\cite{who2020}. Given the vast human and economic cost, considerable effort has been given to understanding and modelling epidemics by the media and scientific community \cite{pressgazette,ferguson2020report}. At this early stage, the media in particular have performed the role of disseminating mortality data between and within countries, often through data visualizations. 

Such visualizations \cite{pressgazette} are informally used to assess the responses and interventions of various Governments, or the quality of their health services, and have therefore put pressure on governments to allocate resources based on apparent differences between the performance of countries. We caution against making such judgements based on raw data alone without controlling for factors that may impact the rate of spread of COVID-19 and the severity of any outbreak. These factors may include: demographic variables such as age and ethnicity, pre-existing medical conditions, patterns of mobility (i.e., travel), and cultural factors such as voluntary mask wearing, household age profile, or patterns of social contact. These factors exist independently of the quality of any specific Government coronavirus response, and more formal analysis is beginning to emerge that suggests they are important for the apparent severity of epidemics between countries \cite{chaudhry2020country}.

Any assessment of a COVID-19 epidemic in any country must be carried out carefully, in the context of these factors. It will be a major statistical exercise to do this properly, considering interactions rigorously. In this article, we make a minor early contribution to such an audit by considering one demographic factor, population density, while controlling for cross-cultural differences that may impact country-level responses to the virus.

Population density plays a significant role in emerging infectious diseases \cite{jones2008global, weiss, morse}. The basic reproduction number ($R_0$) for a zoonotic pathogen not yet fully adapted for human-to-human transmission may in fact require sufficiently many interactions, facilitated by population density, for an outbreak to occur \cite{karesh}. This need for a sufficient density of people susceptible to infection was apparent to those first modelling infectious disease outbreaks nearly 100 years ago (\cite{SIRpaper}). Contemporary modelling of influenza pandemics have utilised this framework, and subsequently employed it to model COVID-19 epidemics (\cite{ferguson2006strategies,ferguson2020report}). Of note in these models is the use of the standard measure of population density (total number of people divided by area) applied to relatively high resolution population data (square kilometre). This prompted us to consider two questions: does weighting high resolution population data by the local density experienced by an average individual improve its ability to predict epidemic growth, and could such a methodology be used to explain the dynamics of epidemics between countries? 

Regardless of what demographic variables are chosen to compare COVID-19 epidemics between countries, the same problem will be encountered: individual countries and their governments have responded differently to the virus. Public health interventions have included: testing, contact tracing, and isolation of positive cases and their contacts; restricting public gatherings, physical distancing, mandated face mask wearing, and closures of schools and nonessential businesses \cite{hartley}. It is noticeable that such measures appear to have been more effective in East Asian countries than Europe and North America; for example, China has reduced cases by 90\% but Italy and the USA have not replicated this performance \cite{remuzzi,parodi}. This may be partly due to the speed with which control measures have been instigated; and in particularly, when the strongest measure, lockdowns, have been ordered \cite{anderson,flaxman2020estimating}. Any analysis that attempts to compare country-level COVID-19 epidemics will need to control for these differences. The Oxford COVID-19 Government Response Tracker collates these country-level responses into a ‘Stringency index’ \cite{hale}; however, whether these government responses are derived from a more fundamental difference in cultural responses to infectious diseases has not been currently explored.

It has been hypothesised that pathogen exposure puts selection pressure on animals to modify their behaviour, which in humans has been labelled the `pathogen-stress hypothesis' \cite{fincher2008pathogen,fincher2019encyclopedia,morand2018}. Fincher et al. \cite{fincher2008pathogen} hypothesised that historical pathogen prevalence could explain variation in cross-cultural measures of individualism and collectivism. In particular, historical and contemporary pathogen prevalence was significantly negatively correlated with, for example, the cross-cultural measure of individualism proposed by Hofstede \cite{hofstede2005}, which captures to what extent a country values individual freedom over collective benefit. This axis between individualism and collectivism has been shown to be particularly stark between Western and East Asian countries, with the former more individualistic and the latter more collectivist. Interestingly, countries with high individualism scores also have a greater susceptibility to zoonotic disease outbreaks \cite{morand2018}. Whether these observations can be replicated for the current COVID-19 pandemic and used to control for country-level responses is currently an unanswered question.

We have used publicly available data to examine whether population density can explain the burden of COVID-19 disease between countries. We first explore two methods for rationalising the ‘lived density’ experienced by an average individual. High resolution population data from the WorldPop data set has  allowed us to calculate two measures of population-weighted density (PWD) for multiple countries across East Asia, Europe, North America, and US states. We investigate whether these PWD measures can explain the initial and current stages of the ongoing COVID-19 pandemic. Finally, we examine whether the Hofstede cross-cultural measure of individualism can account for variance between the current size of COVID-19 epidemics, and subsequently use both PWD and the Hofstede score to explain the current size of COVID-19 epidemics across a range of countries. This analysis provides a rationale for controlling two important variables, population density and country-level responses, when comparing epidemics between countries.

\section{Methods} \label{sec:methods}

\subsection{COVID-19 mortality data}

We use two sets of data in this paper to measure the extent of COVID-19 in a region, firstly a measure that captures the rate of spread early on, and secondly a measure that captures the size of the epidemic after a period of time has passed.

The first measure examines the rate of spread early in the epidemic by choosing a relatively small threshold value of deaths, and then comparing the number of deaths that have occurred a fixed number of days later. For concreteness, we choose these numbers to be 5 and 5: that is, our metric is ``number of deaths that have occurred 5 days after deaths hit 5''.  In the sense of the classical SIR model \cite{SIRpaper}, we anticipate that this is a phase of unrestricted exponential growth, where we seek to compare the growth exponent. Since this measure seeks to capture the early rate of spread, we do not believe it is appropriate to normalize these figures by total population size, since this corresponds to a situation of a limited number of small localized outbreaks which have not spread to the whole country.

Clearly the choice of these numbers is somewhat arbitrary: we want to choose a threshold that is large enough that daily numbers are not affected too much by random fluctuations, and to wait a long enough period for random daily effects to cancel out. However, waiting too long means that the numbers can be affected by Government actions such as lockdowns. Further, waiting for deaths to hit too high a level may exclude some smaller countries. Some experimentation has shown that our results are  robust to the choice of parameters. 

An alternative method would be to study the rate of growth of cases. However, such comparison risk being distorted by the significant variation between availability of tests in different countries early in the pandemic.

Our second measure is to consider total deaths per million 120 days after the 1st clinically confirmed case. We are therefore measuring epidemics at a comparable stage of their duration. 

The data on COVID-19 deaths itself was obtained online. Data for European countries was taken from \cite{ourworld}, which is itself based on ECDC data. Data for US states was downloaded from \cite{nytgit}. Data for East Asian countries was also taken from \cite{ourworld}.

\subsection{Population data}


We studied 30 European countries, using figures of standard population density $\rho_S$ (see Results for notation) taken from Wikipedia \cite{wikidensity}, and using the values of the non-empty lived density $\rho_N$ for each country calculated by Rae from Eurostat data, and stated in \cite{rae}. Some extremely small countries (where the size of the COVID-19 outbreak was too small to achieve the threshold discussed above) were excluded from the comparison.

Additionally, we used WorldPop population count data \cite{worldpop} to calculate the population-weighted density $\rho_W$ for these countries, using the formula \eqref{eq:rhoQ}. As described in \cite{sorichetta,stevens} WorldPop distribution datasets are produced by disaggregating administrative-unit based official population estimates into grid cells having a resolution of 3 arc seconds (approximately 100m at the equator) before summarizing them at 30 arc seconds (approximately 1km at the equator). To ensure comparability between countries, 2019 gridded country-based population estimates were adjusted to match the corresponding UNPD country total estimates \cite{untotals}.


We tabulate the resulting values $\rho_N$, $\rho_W$, $\rho_S$ and $V = \rho_W/\rho_S$ in Appendix \ref{sec:eurodata} for completeness. Additionally, in Appendix \ref{sec:miscdata} we consider a range of other countries of interest and  provide values of their standard population density $\rho_S$ (again taken from Wikipedia \cite{wikidensity}) and population-weighted density $\rho_W$ calculated from WorldPop data \cite{worldpop} in the same way.

We studied the 50 US States, using figures of standard population density $\rho_S$ taken from Wikipedia, and calculating the population-weighted density $\rho_W$ ourselves, again based on WorldPop data. Values of $\rho_N$ were not available for US states, but since we found a stronger relationship between rate of spread in Europe with $\rho_W$ than with $\rho_N$ (see Figures \ref{fig:EuroPlots2} and \ref{fig:EuroPlots3}), in any case we prefer to focus on $\rho_W$. We tabulate the density values for US states in Appendix \ref{sec:usdata} for completeness.



\subsection{Hofstede individualism score}

We used the values of this measure for a range of countries calculated from questionnaires and tabulated in \cite[Table 4.1]{hofstede2005}.

\subsection{Nomenclature of regions analysed, data processing and availability}

Global regions analysed were: Western/West (Europe, USA, Canada, Israel); East Asia and Oceania (Australia and New Zealand). Countries are listed in appendices \ref{sec:eurodata} and \ref{sec:miscdata}

Fits for normal or lognormal (base 10) distributions were generated using a maximum likelihood approach, and the likelihoods compared for statistically significant differences (see \cite[Section 6.7.2]{burnham2002model}). It was observed that the initial rate of spread of COVID-19 and later deaths per million associated with it followed a lognormal distribution, and this was used for regression analysis. The antilog is presented on graphs to aid interpretation.
All data is available at: {\tt https://github.com/ptrckgrlnd/COVID-PWD-and-Individualism}

\section{Results} \label{sec:plots}

\subsection{Lived density, or population density as experienced by a random individual}

First we motivate and define the various measures of population density that we use in this paper.

\begin{example} \label{ex:countries} Consider three countries, each with 100,000 people and an area of 100$km^2$, and consider the population of each square kilometre grid square. As illustrated in 
Figure \ref{fig:countries}:
\begin{enumerate}
    \item Averagia has a uniform spread of population, with 1,000 people living in each grid square. 
    \item Builtupia has ten towns, each consisting of 10,000 people living in a single square kilometre, and the remaining land is uninhabited.
    \item Citia has one city, where 100,000 people live in a single square kilometre, and the remaining land is uninhabited.
\end{enumerate}
\end{example}

\begin{figure}[h!]
\includegraphics[width=\textwidth]{./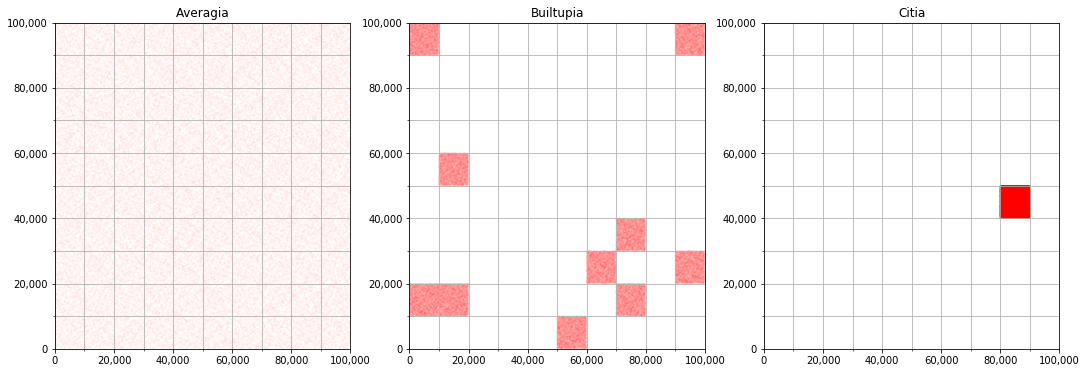}
\caption{Schematic diagram of population in Averagia, Builtupia and Citia -- see Example \ref{ex:countries}. \label{fig:countries}}
\end{figure}

Since each standard population density is $100000/100 =  1000 \psk$, we might naively say that each country is equally crowded. However, if we consider the lived experience of a person in each country, it is clear that in everyday life residents of Citia would have daily encounters with more people than in Builtupia, who in turn would have daily encounters with more people than Averagia. Since daily encounters are likely to cause infection events, it seems natural to imagine that COVID-19 would spread more rapidly in Citia than in Builtupia, and in turn faster in Builtupia than in Averagia. We would like to use a measure of population density that captures this, and here define two such measures.

We will consider a region of total area $A$, divided into $M$ subareas of area $A_1$, \ldots, $A_M$ respectively. We will write $n_i$ for the population of the $i$th subarea, and write $n = \sum_{i=1}^M n_i$ for the total population of the region.

\begin{definition} \label{def:spd}
In this notation, the {\em standard population density} is 
\begin{equation} \label{eq:rhoS} \rho_S = \frac{n}{A} = \frac{ \sum_{i=1}^M n_i}{\sum_{i=1}^M A_i}.\end{equation}
\end{definition}

\begin{definition} 
We will refer to the {\em non-empty lived density} \cite{rae} as the same expression normalized by the total area  where some people live. That is, we write $S$ for the set of regions with non-zero population, and consider
\begin{equation} \label{eq:rhoN}
 \rho_{N} = \frac{ \sum_{i \in S} n_i}{\sum_{i \in S} A_i}.
 \end{equation}
\end{definition}

Note that in comparison with the expression \eqref{eq:rhoS} given for $\rho_S$, the numerator is unchanged here, since removing terms which are zero does not affect the value of a sum. Indeed, if each subarea contains at least one person then the non-empty lived density $\rho_N$ coincides with the standard population density $\rho_S$.

\begin{definition} \label{def:qld}
We will refer to the {\em population-weighted density} \cite{craig1984averaging} as the sum
\begin{equation} \label{eq:rhoQ}
\rho_W = \frac{1}{n} \sum_{i=1}^M \frac{n_i^2}{A_i} 
= \sum_{i=1}^M \left( \frac{n_i}{n} \right) \left( \frac{n_i}{A_i} \right).
\end{equation}
\end{definition}

We can think of this as a quadratic measure because if each subarea has $A_i=1$ then it becomes $\rho_W = \left( \sum_{i=1}^M n_i^2 \right)/n$. Note that the form of this quantity is somewhat reminiscent of the Herfindahl--Hirschman Index, which is used in economics to measure diversity of market share (see \cite{hirschman1964american} for a history of this quantity).

\begin{remark} \label{rem:area} As described for example in \cite{ottensmann2018population}:
\begin{enumerate}
    \item We can think of this population-weighted density $\rho_W$ as follows: select an individual uniformly at random from the population (with probability $n_i/n$ this will be someone from subarea $i$). Then $n_i/A_i$ is the density of their local subarea. Hence, assuming the subareas are small enough to be reasonably homogeneous, this measure represents the expected local density, sampling by person. 

\item In contrast,  the standard population density corresponds to sampling by area. That is, pick a point uniformly in the region (with probability $A_i/A$ this will be a point in subarea $i$). Then according to this distribution, the expected local density will be 
\begin{equation} 
\sum_{i=1}^M \left( \frac{A_i}{A} \right) \left( \frac{n_i}{A_i} \right) = \frac{n}{A},\end{equation}
the standard population density $\rho_S$.
\end{enumerate}
\end{remark}

We return to Example \ref{ex:countries}, where recall that each country had standard population density $\rho_S = 1000$. Considering the square kilometre grid squares as the subareas, note that that
the non-empty lived density $\rho_N$ of Averagia, Builtuipia and Citia is $100000/100 = 1000 \psk$, $100000/10 = 10000 \psk$ and $100000/1 = 100000 \psk$ respectively. In fact, calculation shows that the same three values hold for the population-weighted  density $\rho_W$ in these cases (though this will not be true in general). For example, for Builtupia, there are 10 non-zero terms in the sum of
$\left( \frac{n_i}{n} \right) \left( \frac{n_i}{A_i} \right)$, each equal to $(1/10)( 10000/1)$, giving 10000 overall.

Notice that $\rho_S$, $\rho_N$ and $\rho_W$ are all measured in the same units, namely $\psk$, meaning that it is legitimate to compare them. It turns out that the three measures are always ordered in the same way for every region:

\begin{lemma} \label{lem:compare}
For any region, the three measures satisfy
$$ \rho_S \leq \rho_N \leq \rho_W,$$
with equality in the first inequality if and only if all the areas have non-zero population, and with equality in the second inequality if and only if all the areas of non-zero population have the same density.
\end{lemma}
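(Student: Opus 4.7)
The plan is to separate the two inequalities and handle them independently, since they involve different mechanisms.

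For $\rho_S \leq \rho_N$, I would simply compare the defining expressions. The numerators $\sum_i n_i$ and $\sum_{i \in S} n_i$ are identical, because the terms we drop correspond to subareas where $n_i = 0$. The denominator of $\rho_S$ is $\sum_{i=1}^M A_i$ while the denominator of $\rho_N$ is $\sum_{i \in S} A_i$; the difference is $\sum_{i \notin S} A_i \geq 0$. Dividing by a larger nonnegative quantity yields a smaller ratio, giving $\rho_S \leq \rho_N$, with equality precisely when $\sum_{i \notin S} A_i = 0$, i.e.\ every subarea has nonzero population.

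For $\rho_N \leq \rho_W$, the key observation is that $\rho_W = \frac{1}{n} \sum_{i \in S} \frac{n_i^2}{A_i}$ (the terms with $n_i = 0$ vanish), and this naturally invites the Cauchy--Schwarz inequality. Writing $n_i = \frac{n_i}{\sqrt{A_i}} \cdot \sqrt{A_i}$ for $i \in S$, Cauchy--Schwarz gives
\[
n^2 = \Bigl(\sum_{i \in S} n_i\Bigr)^2 \leq \Bigl(\sum_{i \in S} \tfrac{n_i^2}{A_i}\Bigr) \Bigl(\sum_{i \in S} A_i\Bigr),
\]
and rearranging yields $\rho_N = \frac{n}{\sum_{i \in S} A_i} \leq \frac{1}{n} \sum_{i \in S} \frac{n_i^2}{A_i} = \rho_W$. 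The standard equality condition for Cauchy--Schwarz requires $n_i/\sqrt{A_i}$ to be proportional to $\sqrt{A_i}$ on $S$, equivalently $n_i/A_i$ to be constant across all non-empty subareas, which is exactly the stated condition.

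An alternative and more intuitive route, which I would mention as a sanity check, is to view both $\rho_N$ and $\rho_W$ as weighted averages of the local densities $d_i = n_i/A_i$ over $i \in S$: $\rho_N$ uses area weights $A_i/\sum A_j$, while $\rho_W$ uses population weights $n_i/n$. Since $n_i/n = d_i \cdot A_i / n$, the population weights over-weight the higher-density subareas relative to the area weights, and a covariance/Chebyshev-sum argument shows $\rho_W \geq \rho_N$ with equality iff $d_i$ is constant on $S$. There is no genuine obstacle here; the main thing to be careful about is handling the empty subareas cleanly so that Cauchy--Schwarz is applied on $S$ only (avoiding division by $A_i$ for empty regions if one allowed $A_i = 0$), and stating the equality conditions precisely.
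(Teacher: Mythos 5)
Your proof is correct and follows essentially the same route as the paper: the first inequality by comparing denominators with identical numerators, and the second by applying Cauchy--Schwarz with the splitting $n_i = \frac{n_i}{\sqrt{A_i}} \cdot \sqrt{A_i}$ over the populated subareas, with the same equality conditions. The alternative weighted-average remark is a nice sanity check but does not change the substance.
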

\begin{proof}
 The fact that $\rho_S \leq \rho_N$ is clear, since the two expressions both have the same numerator $n$, but the non-empty lived density has a smaller denominator since the sum is taken over a smaller range. Clearly, equality holds if and only if the denominators are equal. 

The fact that $\rho_N \leq \rho_W$ is more subtle, but can be proved using the Cauchy-Schwarz inequality. Recall that we write $S$ for the set of areas of non-zero population, then use the fact that
\begin{equation}
 n^2 = \left( \sum_{i \in S} \frac{n_i}{\sqrt{A_i}} \sqrt{A_i} \right)^2 \leq \left( \sum_{i \in S} \frac{n_i^2}{A_i} \right) \left( \sum_{i \in S} A_i \right) = (n \rho_W) \left( \sum_{i \in S} A_i \right) ,\end{equation}
and rearrange, using the fact that the value of $\rho_W$ is unchanged if we restrict to a sum over $i \in S$ (since terms with
$n_i = 0$ do not contribute to the sum). Equality holds if and only if each populated subarea has the same density, that is if $n_i/A_i$ does not depend on $i$ for each $i \in S$.
\end{proof}

Indeed, a version of the same argument shows that if we partition a subarea into smaller subareas (by obtaining more finely grained population and area data) then the population-weighted density $\rho_W$ does not decrease, and will strictly increase unless all the new subareas have the same density. In other words, the maximum value of $\rho_W$ for a region is obtained by breaking it into small homogeneous subareas. 

\begin{definition} \label{def:regcoeff}
For a given region we define the {\em variability coefficient} $V = \rho_W/\rho_S$. Note that this is a dimension-free quantity.
\end{definition}

From Lemma \ref{lem:compare}, we know that  $V \geq 1$ for any region. This ratio measures the extent to which population is evenly distributed in a region; returning to the toy Example \ref{ex:countries}, notice that Averagia has $V=1$, Builtupia has $V=10$ and Citia has $V=100$. These values may help calibrate our understanding of the values of $V$ observed for actual regions.



\subsection{Population density analysis of Western countries}

First, since we have values for all three measures $\rho_S$, $\rho_N$ and $\rho_W$ for a range of European countries, we compare
how well each measure explains the rate of spread of COVID-19.

\begin{figure}[h!]
    \centering
    \includegraphics[width = 0.7\textwidth]{./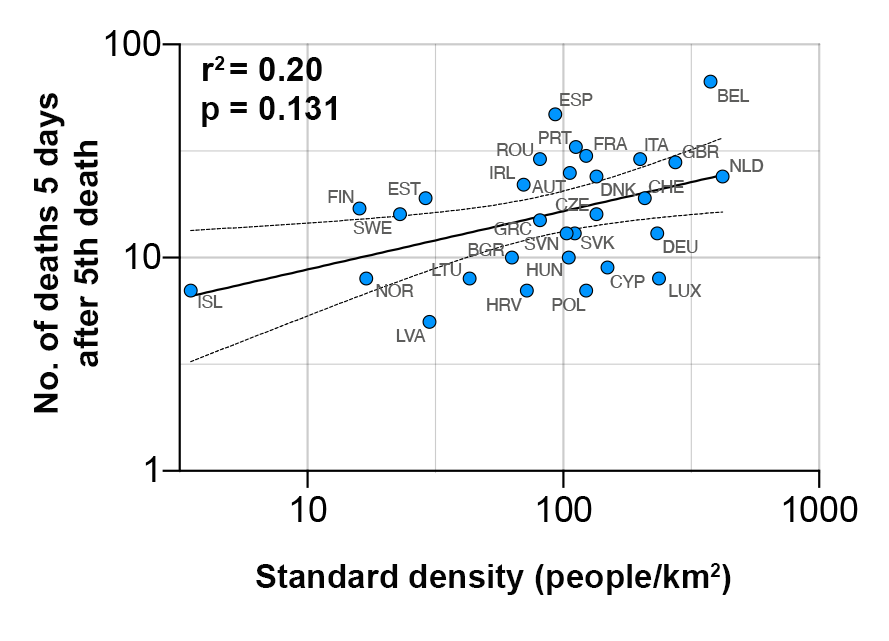}
     \caption{Simple linear regression analysis of rate of COVID-19 spread against standard  population density $\rho_S$ for European countries. Both axes are on a log scale. $n=30$, dashed lines represent 95\% CI. 
    \label{fig:EuroPlots1}}
\end{figure}

In Figure \ref{fig:EuroPlots1} we plot the correlation  between the rate of COVID-19 spread and the standard population density $\rho_S$ for European countries. However, the $r^2$ of $0.20$ is relatively low and the $p$-value of $0.131$ means a lack of  statistical significance. In contrast, for the rate of spread and the respective lived density measures, the effect is stronger than for the corresponding plot for $\rho_S$. In Figure \ref{fig:EuroPlots2} we see a significant correlation between rate of spread and $\rho_N$ for Europe ($r^2 = 0.26$ and $p < 0.004$), and in Figure \ref{fig:EuroPlots3} we see an even stronger effect in terms of $\rho_W$ ($r^2 = 0.53$ and $p < 0.0001$). Therefore, we conclude that in each case using non-standard measures of population density reveals statistically significant effects compared with those arising from the standard population density $\rho_S$. It appears that the population-weighted measure $\rho_W$ is of the most value in this sense, explaining over half the variation in the rate of spread of COVID-19 across Europe.

\begin{figure}[h!]
    \centering
    \includegraphics[width = 0.7\textwidth]{./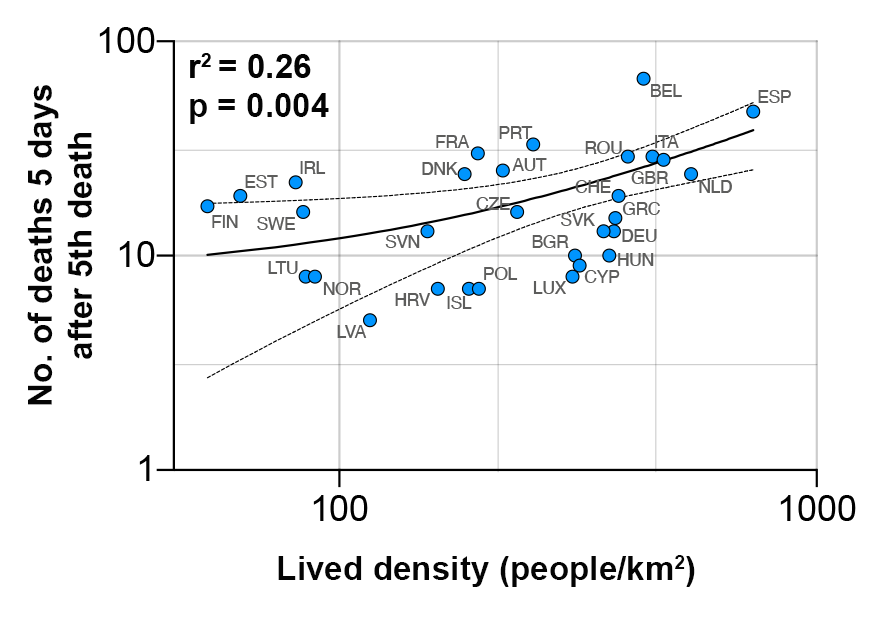}
     \caption{Simple linear regression analysis of rate of COVID-19 spread against non-empty lived  population density $\rho_N$ for European countries. Both axes are on a log scale. $n=30$, dashed lines represent 95\% CI.
    \label{fig:EuroPlots2}}
\end{figure}

\begin{figure}[h!]
    \centering
    \includegraphics[width = 0.7\textwidth]{./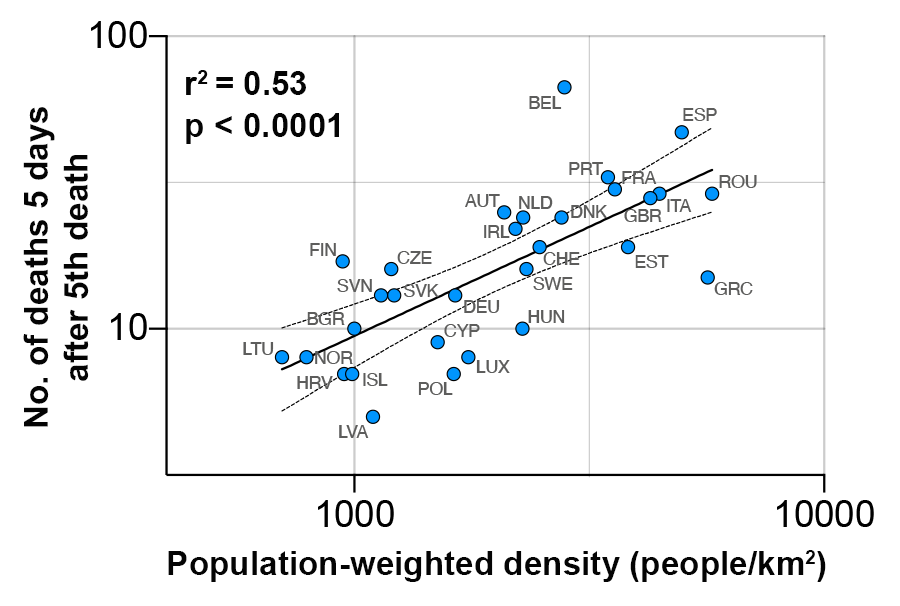}
     \caption{Simple linear regression analysis of rate of COVID-19 spread against population-weighted density $\rho_W$ for European countries. Both axes are on a log scale. $n=30$, dashed lines represent 95\% CI.
    \label{fig:EuroPlots3}}
\end{figure}

Having established that $\rho_W$ works well to explain the variation in rates of spread among European countries, we can perform a similar analysis for a wider range of countries.

First, we add the United States, Canada and Israel to the range of countries considered, to emphasise that the correlation observed is not restricted to European countries. As before, the standard density $\rho_S$ does not explain a significant fraction of the correlation, achieving $r^2 = 0.13, p = 0.0378$ (figure not shown here for brevity).

\begin{figure}[h!]
    \centering
    \includegraphics[width = 0.7\textwidth]{./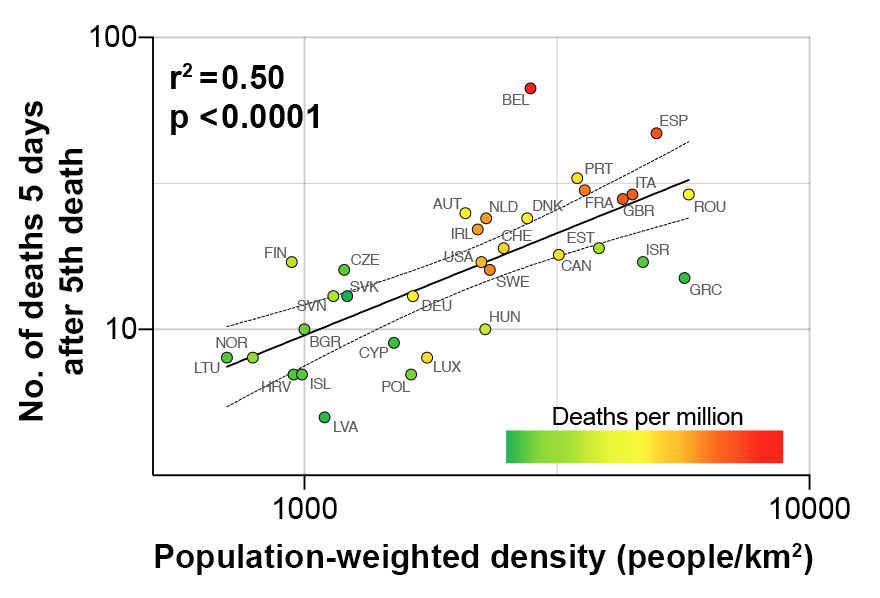}
    \caption{Simple linear regression analysis of rate of COVID-19 spread against population-weighted density $\rho_W$  for a range of Western countries. 
    Plot is on a log-log scale. $n=33$, dashed lines represent 95\% CI. Deaths per million are 120 days after the 1st confirmed case for each country.}
    \label{fig:WesternPlot}
\end{figure}

However, in Figure \ref{fig:WesternPlot}, we show that adding these further countries still leaves a good fit between $\rho_W$ and the number of deaths 5 days after the 5th death. Further, we show that 
this early rate of spread is crucial in determining the ultimate scale of the epidemic, by shading each point according to the number of deaths per million 120 days after the 1st case.  It is striking that generally the countries with the highest final death tolls are positioned to the right-hand side of the graph, indicating that they are typically those with large population-weighted densities.

\subsection{Limitations of population density: US states}

Next, we perform a similar analysis for the US states, to confirm analysis of \cite{riley} who found a statistically significant effect for population-weighted density $\rho_W$. 

\begin{figure}[h!]
    \centering
    \begin{tabular}{cc}
a) &     \includegraphics[width = 0.7\textwidth]{./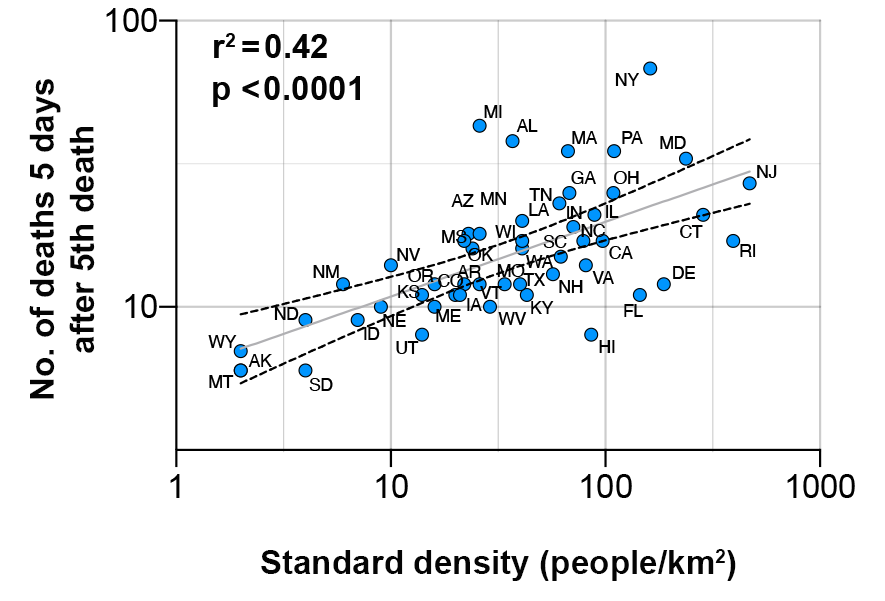} \\
b) &     \includegraphics[width = 0.7\textwidth]{./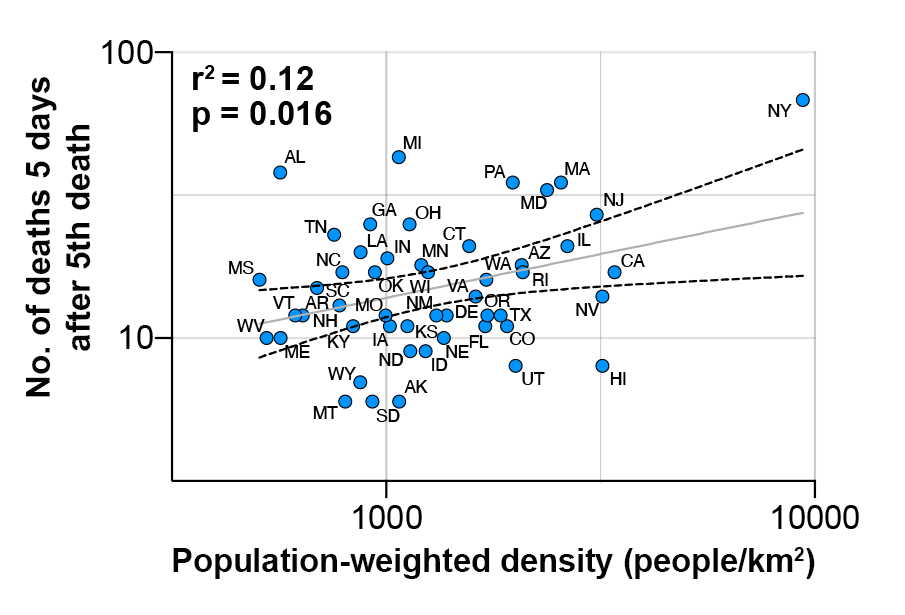}
 \end{tabular}
    \caption{Simple linear regression analysis for US states of rate of COVID-19 spread against
    a) standard density $\rho_S$ b)
    population-weighted density $\rho_W$. 
    Plots are on a log-log scale. $n=50$, dashed lines represent 95\% CI.}
    \label{fig:USPlot}
\end{figure}

As shown 
in Figure \ref{fig:USPlot}, the population-weighted density does indeed provide a statistically significant effect ($r^2 = 0.12$, $p = 0.016$). However it is striking that in this case, in contrast to the case of European countries, the standard density $\rho_S$ explains more of the variation ($r^2 = 0.42$, $p <0.0001$).

In particular, we can observe from Figure \ref{fig:USPlot} that there is surprisingly little variability in the values of $\rho_W$ calculated. For example, even extremely sparsely-populated states such as Alaska and  Wyoming (with standard density $\rho_S \leq 2$) have a population-weighted density $\rho_W$ between 800 and 1100, comparable with denser states such as Georgia ($\rho_S = 68$) and Indiana ($\rho_S = 71$).

\begin{figure}[h!]
    \centering
    \includegraphics[width = 0.7\textwidth]{./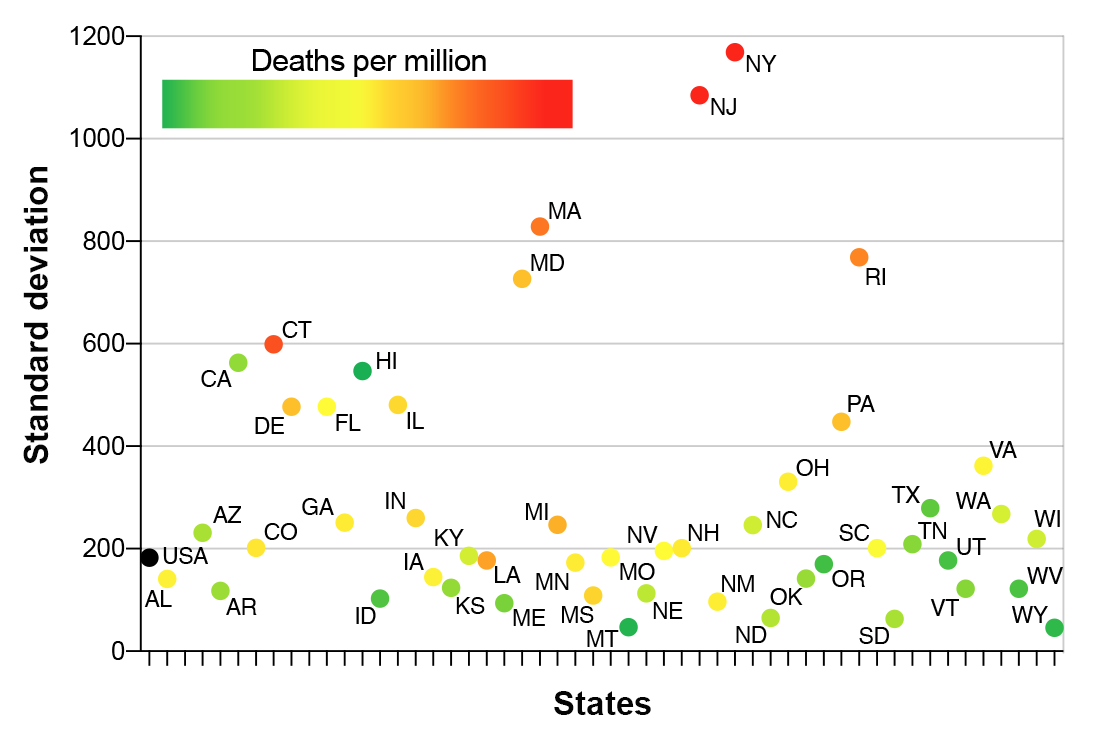}
    \caption{Values of standard deviation observed for US states, shaded according to the deaths per million 120 days after the 1st confirmed case in each state}
    \label{fig:USPlot2}
\end{figure}

We can understand this better by studying the standard deviation of the population values $n_i$ within grid squares in a state. If the squares all have area of one square kilometer, then calculation in the spirit of Remark \ref{rem:area} shows that since the expected value of $n_i$ is $\rho_S$, the variance of $n_i$ satisfies
$$ {\rm Var}(n_i) = \rho_S \rho_W - \rho_S^2,$$
so that the standard deviation is $\rho_S \sqrt{V-1}$, meaning it captures a combination of standard density and variability. 

In Figure \ref{fig:USPlot2} we plot the values of standard deviation for each state in alphabetical order, with points shaded according to their deaths per million 120 days after the first confirmed case in each state. It is striking to observe that high values of this standard deviation measure correctly identify the states with highest numbers of deaths - with the majority, however, having similar values for standard deviation and deaths per million.

\subsection{The Hofstede measure of individualism can control for country-level differences in the response to COVID-19}

Population-weighted density could not explain variation in the early or late stages of COVID-19 epidemics between multiple East Asian countries (data not shown). However, population-weighted density was able to significantly explain the number of deaths per million 120 days after the 1st confirmed case across a global collection of countries including both Western and East Asian countries but the variance explained was relatively low ($r^2 = 0.14$, $p = 0.0075$). This lack of explanatory power for population density (either $\rho_S$ or $\rho_W$) is particularly noticeable as multiple East Asian countries have very high population densities. For example, Singapore with a population-weighted density of 21,925 and Hong Kong with 46,301 are far in excess of any Western country. As both pre- and peri-pandemic responses to new and emerging infectious diseases could be affected by cultural differences we investigated the explanatory power of the Hofstede individualism score, which has been correlated with historical pathogen exposure \cite{fincher2008pathogen,fincher2019encyclopedia,morand2018}

 \begin{figure}[h!]
    \centering
    \includegraphics[width = 0.7\textwidth]{./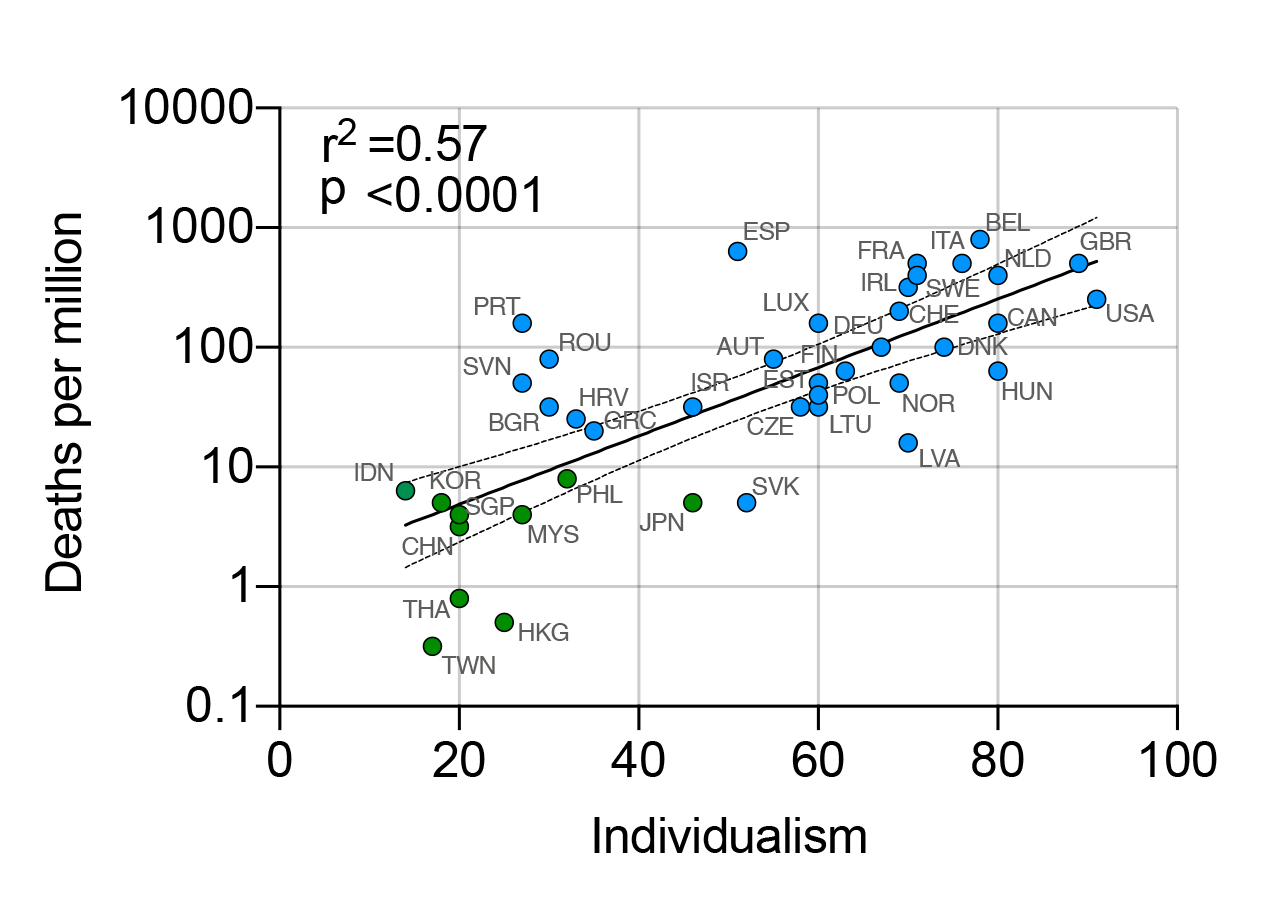}
    \caption{Simple linear regression analysis shows the Hofstede measure of individualism explains the global variation in deaths per million 120 days after the 1st confirmed COVID-19 case (green = Asian countries, blue = Western). $n=41$, dashed lines represent 95\% CI.   }
    \label{fig:individual}
\end{figure}

As can be seen in Figure \ref{fig:individual}, the Hofstede measure of individualism significantly explains variation in the number of deaths 120 days after the 1st clinically confirmed COVID-19 case across multiple Western and East Asian countries. 

One possible criticism of the individualism measure in this context is that the green and blue points are somewhat separated in Figure \ref{fig:individual}, and so this score may be serving as a proxy for other factors that give the difference between Western and East Asian countries. However, we observed that the Hofstede individualism score can control for country-level responses when examining the role of population density in explaining variation in COVID-19 epidemic sizes across Western countries late in the ongoing pandemic. For example, Figure \ref{fig:individual2} shows that individualism can improve the fit of deaths per million across Western countries. A multiple linear regression based on both population-weighted density $\rho_W$ and the Hofstede individualism score shows that both these variables are statistically significant in this context, and together give $R^2 = 0.46$ when considering deaths per million 120 days after the 1st confirmed COVID-19 case (which is considerably higher than the $r^2 = 0.27$ achieved in this context by $\rho_W$ alone).

 \begin{figure}[h!]
    \centering
    \includegraphics[width = 0.7\textwidth]{./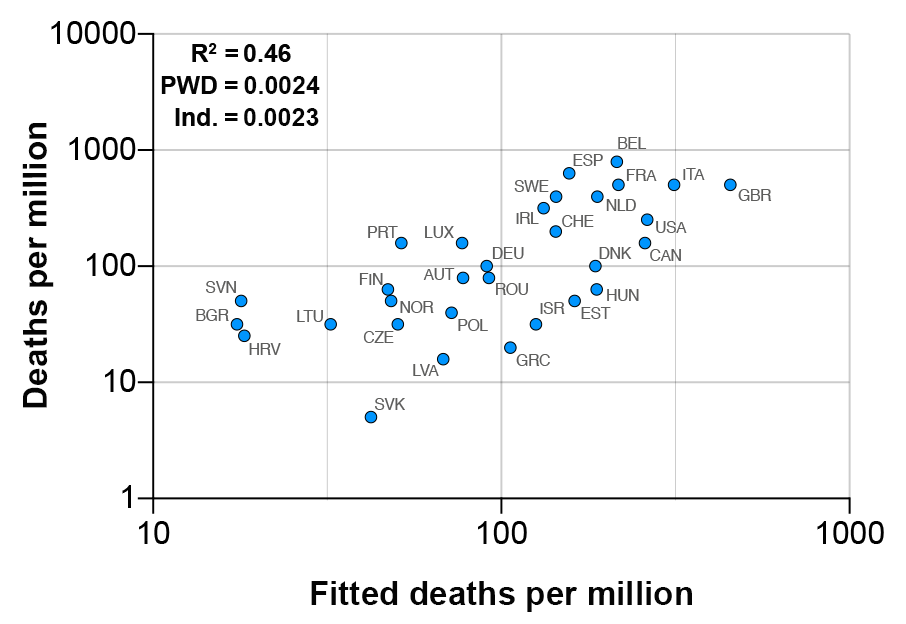}
    \caption{Multiple linear regression between deaths per million 120 days after the 1st confirmed COVID-19 case and a combination of the Hofstede individualism score (Ind) and population-weighted density $\rho_W$ (PWD) for Western countries. $n=33$.  }
    \label{fig:individual2}
\end{figure}

\section{Discussion} \label{sec:conclusion}

Country-level comparisons have been used to assess government responses to the COVID-19 pandemic. For these comparisons to be useful, they must be done using carefully controlled analysis. This study is an early contribution to this effort which focuses on population density. We find that the initial rate of spread of COVID-19 across Western countries can be readily explained by their population-weighted density. PWD does not appear to be as crucial for the spread of COVID-19 in East Asia, or for explaining the size of current epidemics recorded across multiple countries. However, controlling for country-level responses using the Hofstede cross-cultural measure of individualism allows a substantial proportion of the variance in the size of current COVID-19 epidemics across Western countries to be explained.

PWD can be seen to more readily explain the initial rate of COVID-19 spread by examining individual countries as plotted in Figure \ref{fig:EuroPlots3}. For example, Spain has one of the quickest rates of COVID-19 spread yet its standard population density $\rho_S = 93 \psk$ is relatively low; in contrast, Spain has one of the highest lived non-empty population densities in Europe, with a value of $\rho_N = 737 \psk$, and one of the highest PWD, with $\rho_W = 4980 \psk$. As discussed in detail in \cite{rae}, much of Spain is unpopulated, with very high density populations in Barcelona and Madrid. As a result, other than three very small countries (Monaco, Andorra and Malta, not studied here because of the size of their epidemics), Spain is one of the most densely populated countries in Europe, which our measure of population density captures. 

Also of note, are countries whose rate of spread is as predicted by their PWD despite popular perception about their relative performance. For example, Germany has a population-weighted density $\rho_W = 1641\psk$ –- lower than Sweden or Ireland (see Appendix \ref{sec:eurodata}) –- reflecting the fact that its population is fairly evenly distributed across the country (giving its relatively low value $V=7$). Considering Germany’s PWD, it may be unnecessary to postulate the existence of an `immunological dark matter'  to explain the current size of its epidemic \cite{friston}. In fact, from Figure \ref{fig:EuroPlots3} the European country which stands out for having a significantly slower spread than expected is Greece, which has a very high population-weighted density. This may be due to its high stringency national response \cite{hale}, which is particularly interesting as it has one of the lowest Hofstede individualism scores within Europe \cite[Table 4.1]{hofstede2005}.

Our observation that standard population density $\rho_S$ but not PWD $\rho_W$ predicts the rate of COVID-19 spread across US states was unexpected. We can observe from Figure \ref{fig:USPlot}b that there is surprisingly little variability in the values of $\rho_W$ calculated.  The fit observed in Figure \ref{fig:USPlot}a due to standard population density $\rho_S$ comes through sparsely-populated states such as Alaska and  Wyoming, with the remainder of the states forming a relatively homogenous cluster. It is striking that New York, the state with  the fastest rate of spread, has a significantly higher value of $\rho_W$ than any other state, whereas is not an outlier in terms of $\rho_S$, indicating the importance of $\rho_W$ in this context. These observations combined suggest that the lived population density experienced by many Americans is similar; this hypothesis is supported by the observation that the standard deviations for population density within US states as plotted in Figure \ref{fig:USPlot2} are also similar. It may be possible that other factors such as mobility (i.e., travel between states) affect the rate of COVID-19 spread within the USA; as has been observed, for example, in China \cite{lai}. Although, for epidemic management, there is still value in observing that US states with the highest PWD, like New York, are, as predicted, those with quickest rate of COVID-19 spread. For the USA as a whole, the Hofstede measure of individualism was found to accurately predict its current epidemic size relative to other less individualistic countries. Why might this cross-cultural axis between individualism and collectivism explain variation in COVID-19?

As argued by Fincher and Thornhill \cite{fincher2019encyclopedia}, collectivism has a two-fold benefit in protecting societies against pathogens. The first is greater attention to in-group vs out-group members, which can limit contact with out-group members either in general or during disease outbreaks; and the second is increased social pressure to conform to group priorities over individual rights. These are culture-wide attitudes and we hypothesise that the Hofstede  score can be used as a proxy for both government-level responses and everyday social pressure to perform prosocial behaviour. At the level of government, these differences in cultural values may have impacted how quickly and strongly control measures were put in place to limit the spread of COVID19 \cite{hale}. For example, limiting travel within and between countries for both nationals and non-nationals, widespread population surveillance to identify infected individuals and their contacts, and quarantining infected individuals and anyone they have come into contact with; in contrast, in more individualistic countries governments have been slow to restrict individual rights and less comprehensive \cite{hale,hartley}. At the level of everyday individual conduct the ability of a society to enforce prosocial behaviour (e.g., symptom vigilance and self isolation, mask wearing, hand hygiene) may impact the probability of outbreaks; for example, where they are dependent on superspreader events \cite{lloydsmith}.

We have focused on population density as a default, stable variable that may impact differences in the dynamics of epidemics between countries. What other variables might also play a role?

Advanced age is a risk factor for severe disease and mortality following infection with SARS-CoV-2 \cite{onder,wu}. Considering immunosenescence and ‘inflamm-aging’ are known features of advanced age, it is possible that industrialised countries with aging populations may have sufficient densities of immunocompromised people to promote outbreaks of zoonotic pathogens initially ill adapted to human-to-human transmission \cite{franceschi2000, franceschi2014, weiss}). And in terms of country comparisons, age has been a factor impacting relative differences in mortality \cite{chaudhry2020country}.

At the country-level and in the absence of physical distancing measures, mobility has a significant impact on the spread of COVID-19 (\cite{lai}). The degree to which this factor has impacted the international spread of COVID-19 has yet to be formally investigated. However, two countries noticeably absent from our analysis may give possible insight to this question. Although Australia and New Zealand only marginally reduce the variance explained for the initial rate of spread of COVID-19 amongst Western countries ($r^2 = 0.47$, $p<0.0001$), these two highly individualistic countries are outliers in the global comparison of deaths per million using the Hofstede score ($r^2 = 0.41$, $p<0.0001$). Compared to the UK, for example, AUZ and NZ have 4-fold and 10-fold less international arrivals, respectively, per year \cite{worldbank}. Therefore, it may be the case that the international connectivity of countries such as the UK has played an important role in the size of its epidemic \cite{pybus}.

\section {Conclusion}
This study attempts to robustly control one demographic variable, population density, so that relative differences in the spread of COVID-19 between countries can be compared. By highlighting where and when this factor has contributed to the spread of COVID-19, we believe this study can contribute to the discussion about which epidemic control measures are suitable for which countries. In general, shielding immunocompromised people living at relatively high density, for example the elderly in care facilities, should be essential. Other control measures exist over different time-scales. For example, in the short-term highly internationally connected countries such as the UK and USA should limit travel. However, in the long-term these two countries, as well as others across the West, should discuss democratically how to respond to new and emerging infectious diseases rapidly while preserving individual rights. As the prospect of a second wave of COVID-19 is likely \cite{xu}, both the short and long term, however, are now and soon.

\section*{Acknowledgments}

The authors thank Simon Johnson, Richard Pinch, Jenny Dewing, and Nathanael Hozé for useful discussions. The WorldPop project is funded through the Bill \& Melinda Gates Foundation (OPP1134076).

\newpage
\bibliographystyle{abbrv}


\newpage

\appendix

\section{Table of densities for European Countries} \label{sec:eurodata}

\begin{center}
\begin{tabular}{c|r|r|r|r}
Country & Non-empty $\rho_N$ &
Population-weighted $\rho_W$ & Standard $\rho_S$ & Variability $V= \rho_W/\rho_S$ \\
\hline 
Austria  &  220  &  2084  &  106  &  20 \\
Belgium  &  434  &  2804  &  376  &  7 \\
Bulgaria  &  312  &  1000  &  63  &  16 \\
Croatia  &  161  &  952  &  72  &  13 \\
Cyprus  &  319  &  1504  &  149  &  10 \\
Czechia  &  236  &  1198  &  135  &  9 \\
Denmark  &  183  &  2761  &  135  &  20 \\
Estonia  &  62  &  3827  &  29  &  132 \\
Finland  &  53  &  944  &  16  &  59 \\
France  &  195  &  3590  &  123  &  29 \\
Germany  &  376  &  1641  &  233  &  7 \\
Greece  &  379  &  5654  &  81  &  70 \\
Hungary  &  368  &  2280  &  105  &  22 \\
Iceland  &  187  &  989  &  3.5  &  283 \\
Ireland  &  81  &  2203  &  70  &  31 \\
Italy  &  453  &  4460  &  200  &  22 \\
Latvia  &  116  &  1096  &  30  &  37 \\
Lithuania  &  85  &  702  &  43  &  16 \\ 
Luxembourg  &  308  &  1750  &  237  &  7 \\
Netherlands  &  546  &  2289  &  420  &  5 \\
Norway  &  89  &  791  &  17  &  47 \\
Poland  &  196  &  1627  &  123  &  13 \\ 
Portugal  &  255  &  3469  &  112  &  31 \\
Romania  &  402  &  5771  &  81  &  71 \\
Slovakia  &  358  &  1215  &  111  &  11 \\
Slovenia  &  153  &  1140  &  103  &  11 \\
Spain  &  737  &  4980  &  93  &  54 \\
Sweden  &  84  &  2246  &  23  &  98 \\
Switzerland & 385 & 2479 & 208 & 12 \\
UK & 478 & 4265 & 274 & 16 \\
\end{tabular}
\end{center}

\section{Table of densities for other selected countries} \label{sec:miscdata}

\begin{center}
\begin{tabular}{c|r|r|r}
Country  &
Population-weighted  $\rho_W$ & Standard $\rho_S$ & Variability $V = \rho_W/\rho_S$ \\
\hline 
Australia    &    2273  &  3.2  &  758 \\
Canada    &    3188  &  4  &  797 \\
China    &    7098  &  146  &  49 \\
Hong Kong    &    46301  &  6781  &  7 \\
Israel    &    4678  &  417  &  11 \\
Japan    &    5338  &  333  &  16 \\
Malaysia    &    2745  &  99  &  28 \\
New Zealand    &    2030  &  19  &  107 \\
Philippines    &    9254  &  362  & 26 \\
Singapore    &    21925  &  7894  &  3 \\
South Korea    &    9265  &  516  &  18 \\
Indonesia       &   3447  &  141  &  24 \\
Taiwan    &    10602  &  652  &  16 \\
Thailand    &    2832  &  130  &  22 \\
USA    &    2241  &  34  &  66 \\
\end{tabular}
\end{center}

\section{Table of densities for US States} \label{sec:usdata}

\begin{center}
\begin{tabular}{c|r|r|r}
State &	Population-weighted $\rho_W$ & Standard $\rho_S$ & Variability $V= \rho_W/\rho_S$ \\
\hline
AL  &  566  &  37  &  15.3 \\
AK  &  1072  &  0.5  &  2144.6 \\
AZ  &  2070  &  23  &  90.0 \\
AR  &  639  &  22  &  29.0 \\
CA  &  3409  &  97  &  35.1 \\
CO  &  1913  &  20  &  95.7 \\
CT  &  1562  &  286  &  5.5 \\
DE  &  1385  &  187  &  7.4 \\
FL  &  1705  &  145  &  11.8 \\
GA  &  919  &  68  &  13.5 \\
HI  &  3198  &  86  &  37.2 \\
ID  &  1236  &  7  &  176.6 \\
IL  &  2650  &  89  &  29.8 \\
IN  &  1007  &  71  &  14.2 \\
IA  &  1020  &  21  &  48.6 \\
KS  &  1120  &  14  &  80.0 \\
KY  &  838  &  43  &  19.5 \\
LA  &  872  &  41  &  21.3 \\ 
ME  &  567  &  16  &  35.4 \\
MD  &  2372  &  238  &  10.0 \\
MA  &  2557  &  336  &  7.6 \\
MI  &  1070  &  67  &  16.0 \\
MN  &  1208  &  26  &  46.5 \\
MS  &  507  &  24  &  21.1 \\
MO  &  997  &  34  &  29.3 \\ 
MT  &  803  &  2  &  401.5 \\
NE  &  1363  &  9  &  151.5 \\
NV  &  3194  &  10  &  319.4 \\
NH  &  779  &  57  &  13.7 \\
NJ  &  3102  &  470  &  6.6 \\
NM  &  1310  &  6  &  218.4 \\
NY  &  9371  &  162  &  57.8 \\
NC  &  790  &  79  &  10.0 \\
ND  &  1139  &  4  &  284.6 \\
\end{tabular}
\end{center}

\newpage

\begin{center}
\begin{tabular}{c|r|r|r}
State &	Population-weighted $\rho_W$ & Standard $\rho_S$ & Variability $V$ \\
\hline
OH  &  1134  &  109  &  10.4 \\
OK  &  941  &  22  &  42.8 \\
OR  &  1724  &  16  &  107.7 \\
PA  &  1973  &  110  &  17.9 \\
RI  &  2083  &  394  &  5.3 \\
SC  &  690  &  62  &  11.1 \\
SD  &  928  &  4  &  232.0 \\
TN  &  757  &  61  &  12.4 \\
TX  &  1848  &  40  &  46.2 \\
UT  &  2004  &  14  &  143.1 \\
VT  &  614  &  26  &  23.6 \\
VA  &  1617  &  81  &  20.0 \\
WA  &  1713  &  41  &  41.8 \\
WV  &  526  &  29  &  18.1 \\
WI  &  1253  &  41  &  30.6 \\
WY  &  871  &  2  &  435.4 \\
\end{tabular}
\end{center}

\end{document}